\newtheorem{thm}{Theorem}
\newtheorem{defn}{Definition}
\begin{document}

\begin{frontmatter}



\title{The two-dimensional OLCT of angularly periodic functions in polar coordinates}


\author{Hui Zhao$^{a,b}$}
\author{Bing-Zhao Li$^{a,b}$\corref{mycorrespondingauthor}}
\cortext[mycorrespondingauthor]{Corresponding author}\ead{li\_bingzhao@bit.edu.cn}

\address{$^{a}$School of Mathematics and Statistics, Beijing Institute of Technology, Beijing 100081, China}
\address{$^{b}$Beijing Key Laboratory on MCAACI, Beijing Institute of Technology, Beijing 100081, China}

\begin{abstract}
 The two-dimensional (2D) offset linear canonical transform (OLCT) in polar coordinates plays an important role in many fields of optics and signal processing. This paper studies the 2D OLCT in polar coordinates.
Firstly, we extend the 2D OLCT to the polar coordinate system, and obtain the offset linear canonical Hankel transform (OLCHT) formula. Secondly, through the angular periodic function with a period of $2\pi$, the relationship between the 2D OLCT and the OLCHT is revealed.
Finally, the spatial shift and convolution theorems for the 2D OLCT are proposed by using this relationship.
\end{abstract}

\begin{keyword}
Polar coordinates\sep Offset linear canonical transform\sep Offset linear canonical Hankel transform\sep Spatial shift theorem\sep Convolution theorem



\end{keyword}

\end{frontmatter}


\section{Introduction}
In recent years, the linear canonical transform (LCT) \cite{[1],[2],[3],[4],[5],[6]} has become a new and more extensive signal analysis tool in the fields of applied mathematics, signal processing and optical system analysis \cite{[3],[4],[5]}. Compared with the LCT, the offset linear canonical transform (OLCT) \cite{[7],[8],[9]} adds two additional parameters $\tau$ and $\eta$ (corresponding to time offset and frequency modulation, respectively) on its basis, which is more versatile and flexible than the original LCT. The OLCT, also known as special affine Fourier transform \cite{[10]} or non-homogeneous regular transform \cite{[7]}, is a linear integral transform with six parameters $(a, b, c, d, \tau,\eta)$. It is a time-shifted and frequency-modulated version of LCT \cite{[11],[12],[13]}. Many linear transforms widely used in practical applications, such as Fourier transform (FT), offset FT \cite{[7],[9]}, fractional Fourier transform (FRFT) \cite{[7],[9],[14]}, offset FRFT \cite{[12],[15]}, Fresnel transform (FRST) \cite{[2]}, LCT, etc., are all special cases of the OLCT. Therefore, studying the OLCT and developing related theories of the OLCT may help to gain more insights into its special situation and transfer the knowledge gained from one discipline to other disciplines.\

As an extension of many other linear transform, the OLCT has a wide range of applications in optics and signal processing. For the OLCT domain signal, people have carried out a lot of promotion, and got a lot of research results on the OLCT convolution, sampling, etc., which can be found in \cite{[16],[17],[18],[19],[20]}. All these results are signals for dealing with one-dimensional (1D) problems. As we all know, in addition to the usual Cartesian coordinates, polar coordinates can also be used to represent signals. This situation is particularly suitable for the transformation of functions that are naturally described by polar coordinates, such as photoacoustics \cite{[21]}, computed tomography \cite{[22]}, and magnetic resonance imaging \cite{[23]}. The situation of two-dimensional (2D) OLCT in polar coordinates is still unknown. Therefore, it is very important to understand whether the results of existing one-dimensional case can be extended to the two-dimensional polar coordinate case.\

The two-dimensional polar coordinate transform means that the general function in the Cartesian coordinate system produces a special function, which forms an angular period with a period of $2\pi$ in the polar coordinate system, and vice versa. Therefore, the purpose of this article is to obtain a two-dimensional correlation study of the specific function class in the polar coordinate system. First, a new representation of the 2D OLCT in polar coordinates is studied, and the expression of offset linear canonical Hankel transform (OLCHT) is derived in detail. Second, using an extension of the celebrated exponent expansion formula \cite{[24]}, the relationship between the 2D OLCT and the OLCHT is obtained. Third, based on this relationship, the spatial shift theorem and convolution theorem of the 2D OLCT are studied. The results of this paper not only study some useful properties of the 2D OLCT in polar coordinates, but also provide a theoretical basis for the practical applications of optics and signal processing.\

The paper is organized as follows. Section \ref{Preli} provides some basic knowledge of the 2D OLCT in Cartesian coordinate system. In Section \ref{2D in polar cooddinates}, the definition of the 2D OLCT and the OLCHT in polar coordinates is given. In Section \ref{2D Relationship}, the relationship between the 2D OLCT and the OLCHT is discussed. Section \ref{The properties} proves the space shift theorem and convolution theorem of the 2D OLCT. Section \ref{Con} concludes the article.

\section{Preliminaries}
\label{Preli}

In this section, we give some necessary background and nation on the 2D OLCT.\

Let $\mathbf{t}=(t_{1},t_{2})$, $\mathbf{u}=(u_{1},u_{2})$ and $\mathbf{t}\cdot \mathbf{u}=t_{1}u_{1}+t_{2}u_{2}$. The 2D OLCT with real parameters of $A=(a,b,c,d,\tau,\eta)$ of a signal $f(\mathbf{t})$ is defined by \cite{[7],[8],[9],[18]}
\begin{align}
	\begin{split}
		F^{A}(\mathbf{u})=O^{A}_{L}[f(\mathbf{t})](\mathbf{u})=\begin{cases}
			\int_{-\infty}^{+\infty}\int_{-\infty}^{+\infty}f(\mathbf{t})h_{A}(\mathbf{t},\mathbf{u})\rm{d}\mathbf{t},   &b\neq0  \\
			\sqrt{d}e^{i\left[ \frac{cd}{2}(\mathbf{u}-\tau)^{2}+\eta \mathbf{u}\right] }f\left[ d(\mathbf{u}-\tau)\right] ,    &b=0
		\end{cases}
	\end{split}
\end{align}
where the kernel $h_{A}(\mathbf{t},\mathbf{u})$ is
\begin{align}	
	h_{A}(\mathbf{t},\mathbf{u})=K_{A}\,e^{i\left[ \frac{a}{2b}|\mathbf{t}|^{2}+\frac{1}{b}\mathbf{t}(\tau-\mathbf{u})-\frac{1}{b}\mathbf{u}(d\tau-b\eta)+\frac{d}{2b}|\mathbf{u}|^{2}\right] },
\end{align}
\begin{align}
	K_{A}=\dfrac{1}{2\pi b}e^{i\frac{d}{2b}\tau^{2}},
\end{align}
where $ad-bc=1$, $O^{A}_{L}$ denotes the OLCT operator, $|\mathbf{t}|^{2}=t_{1}^{2}+t_{2}^{2}$, $|\mathbf{u}|^{2}=u_{1}^{2}+u_{2}^{2}$ and $\rm{d}\mathbf{t}=\rm{d}t_{1}\rm{d}t_{1}$.

The inverse of the 2D OLCT with real parameters $A=(a,b,c,d,\tau,\eta)$ is given by the 2D OLCT with parameters $A^{-1}=(d,-b,-c,a,b\eta-d\tau,c\tau-a\eta)$. The exact inverse OLCT expression is give by \cite{[7],[8],[9],[18]}
\begin{align}
	f(\mathbf{t})=O^{A^{-1}}_{L}[F^{A}(\mathbf{u})](\mathbf{t})=C\int_{-\infty}^{+\infty}F^{A}(\mathbf{u})h_{A^{-1}}(\mathbf{u},\mathbf{t})\rm{d}\mathbf{\mathbf{u}},
\end{align}
where $C=e^{i\frac{1}{2}\left[ cd\tau^{2}-2ad\tau\eta+ab\eta^{2}\right]}.$

The definition for case $b = 0$ is the limit of the integral in (1) for the case $b\neq0$ as $|b|\rightarrow0$, the OLCT is simply a time scaled version off multiplied by a linear chirp. Therefore, from now on we shall confine our attention to the 2D OLCT for $b\neq0$. And without loss of generality, we assume $b>0$ in the following sections. Some of the special cases of the OLCT are listed in Table 1.

\begin{table*}
	\centering
	\caption{Some of the specific cases of the 2D OLCT}
	\label{tab:1}  
	\begin{tabular}{l l}
		\hline\hline\noalign{\smallskip}	
		Transform & Parameters $A$  \\
		\noalign{\smallskip}\hline\noalign{\smallskip}
		$A=(a,b,c,d,\tau,\eta)$ & Offset linear canonical transform (OLCT)   \\
		$A=(a,b,c,d,0,0)$ & Linear canonical transform (LCT)  \\
		$A=(cos\theta,sin\theta,-sin\theta,cos\theta,0,0)$ & Fractional Fourier transform (FRFT)  \\
		$A=(0,1,-1,0,0,0)$ & Fourier transform (FT)  \\
		$A=(cos\theta,sin\theta,-sin\theta,cos\theta,\tau,\eta)$ & Offset fractional Fourier transform (OFRFT)  \\
		$A=(1,b,0,1,0,0)$ & Fresnel transform (FRST)  \\
		$A=(1,0,0,1,0,\eta)$ & Frequency modulation  \\
		$A=(d^{-1},0,0,d,0,0)$ & Time scaling  \\
		$A=(1,0,0,1,\tau,0)$ & Time shifting  \\
		\noalign{\smallskip}\hline
	\end{tabular}
\end{table*}

\section{The 2D offset linear canonical transform and offset linear canonical Hankel transform in polar cooddinates}
\label{2D in polar cooddinates}

\subsection{2D Offset linear canonical transform}
\label{OLCT}

	Here we make the following symbolic regulations: the function $f\left( r,\theta \right)$, its 2D FT $F\left( \rho,\phi \right)$ and the 2D OLCT $F^{A}\left( \rho,\phi \right)$ are angularly periodic with period $2\pi$ as they are essentially in the form
	$$f\left( rcos\theta,rsin\theta \right)\stackrel{\bigtriangleup}{=}f\left( r,\theta \right),$$
	$$F\left( \rho cos\phi,\rho sin\theta \right)\stackrel{\bigtriangleup}{=}F\left( \rho,\phi \right),$$
	$$F^{M}\left( \rho cos\phi,\rho sin\theta \right)\stackrel{\bigtriangleup}{=}F^{M}\left( \rho,\phi \right).$$
	By changing Cartesian coordinates to polar coordinates related to general functions $f\left( \mathbf{t} \right)$, its 2D FT $F\left( \mathbf{u} \right)$ and 2D OLCT $F^{A}\left( \mathbf{u} \right)$, respectively.
	
	According to the 2D OLCT, we can obtain the definition of the 2D OLCT in polar coordinates as follows:
	\begin{defn}[2D OLCT] 
	Let polar coordinates $t_{1}=rcos\theta$, $t_{2}=rsin\theta$, $u_{1}=\rho cos\phi$, $u_{2}=\rho sin\phi$. Assume the function $f(r,\theta)$ is angularly periodic in $2\pi$, then its 2D OLCT with real parameters of $A=(a,b,c,d,\tau,\eta)$ in polar coordinates is defined by   
	\begin{align}
		\begin{split}
			F^{A}(\rho,\phi)=O^{A}_{L}[f](\rho,\phi)=
			\int_{0}^{+\infty}\int_{0}^{2\pi}f(r,\theta)P_{A}(r,\theta;\rho,\phi)r\rm{d}r\rm{d}\theta,
		\end{split}
	\end{align}	
	where the $P_{A}(r,\theta;\rho,\phi)$ denotes the 2D OLCT kernel in polar coordinates and is given by
	\begin{align}	 
		P_{A}(r,\theta;\rho,\phi)=\dfrac{\ell_{A}}{2\pi b}e^{i\left[\frac{a}{2b}r^{2}+\frac{\sqrt{2}\tau r}{b}sin(\theta+\frac{\pi}{4})-\frac{\sqrt{2}\rho(d\tau-b\eta)}{b}sin(\phi+\frac{\pi}{4})-\frac{r\rho}{b}cos(\theta-\phi)+\frac{d}{2b}\rho^{2}\right] },
	\end{align}
	with $b\neq0$ and $\ell_{A}=e^{i\left[\frac{\left(ad+1\right)\tau^{2}}{2ab}+\frac{\left(d\tau-b\eta\right)^{2}}{2bd}\right]}$.\
\end{defn}

\begin{proof} Let $\mathbf{t}=(t_{1},t_{2})$, $\mathbf{u}=(u_{1},u_{2})$, and the 2D OLCT of $f(\mathbf{t})$ in (1), we have
	\begin{align}
		\begin{split}
			&\;\quad F^{A}(\mathbf{u})=O^{A}_{L}[f(\mathbf{t})](\mathbf{u})\\
			&=\int_{-\infty}^{+\infty}\int_{-\infty}^{+\infty}f(\mathbf{t})K_{A}\,e^{\frac{i}{2b}\left[ a|\mathbf{t}|^{2}+2\mathbf{t}(\tau-\mathbf{u})-2\mathbf{u}(d\tau-b\eta)+d|\mathbf{u}|^{2}\right]}\rm{d}\mathbf{t}\\
			&=\int_{-\infty}^{+\infty}\int_{-\infty}^{+\infty}f(\mathbf{t})K_{A}\,e^{\frac{i}{2b}\left[ \left(\sqrt{a}\mathbf{t}+\frac{\tau}{\sqrt{a}}\right) ^{2}-\frac{\tau^{2}}{a}+\left(\sqrt{d}\mathbf{u}-\frac{d\tau-b\eta}{\sqrt{d}}\right) ^{2}-\frac{\left( d\tau-b\eta\right) ^{2}}{d}-2\mathbf{t}\cdot\mathbf{u}\right]}\rm{d}\mathbf{t}	\\
			&=\int_{-\infty}^{+\infty}\int_{-\infty}^{+\infty}f(\mathbf{t})K_{A}\,e^{\frac{i}{2b}\left[a\left( t_{1}^{2}+t_{2}^{2}\right)+2\tau\left( t_{1}+t_{2}\right)+\frac{\tau^{2}}{a}+d\left( u_{1}^{2}+u_{2}^{2}\right) \right] }\\
			&\times e^{\frac{i}{2b}\left[-2\left(d\tau-b\eta\right)\left( u_{1}+u_{2}\right)+\frac{\left(d\tau-b\eta\right)^{2}}{d}-2\left(t_{1}u_{1}+t_{2}u_{2}\right)\right]}\rm{d}t_{1}\rm{d}t_{2},
		\end{split}
	\end{align}	
	where $K_{A}(t,u)$ is given by (3).\\
	Using polar coordinates $t_{1}=rcos\theta$, $t_{2}=rsin\theta$, $u_{1}=\rho cos\phi$, $u_{2}=\rho sin\phi$, we obtain
	\begin{align}
		\begin{split}
			&\;\quad F^{A}(\rho,\phi)=O^{A}_{L}[f](\rho,\phi)\\
			&=\int_{0}^{+\infty}\int_{0}^{2\pi}f(r,\theta)K_{A}\,e^{\frac{i}{2b}\left[ar^{2}+2\sqrt{2}\tau rsin(\theta+\frac{\pi}{4})+\frac{\tau^{2}}{a}+d\rho^{2}\right]}\\
			&\times e^{\frac{i}{2b}\left[-2\sqrt{2}\rho(d\tau-b\eta)sin(\phi+\frac{\pi}{4})+\frac{\left(d\tau-b\eta\right)^{2}}{d}-2r\rho cos\left(\theta-\phi\right)\right] }r\rm{d}r\rm{d}\theta\\
			&=\int_{0}^{+\infty}\int_{0}^{2\pi}f(r,\theta)e^{\frac{i}{2b}\left[ar^{2}+2\sqrt{2}\tau rsin(\theta+\frac{\pi}{4})+d\rho^{2}-2\sqrt{2}\rho(d\tau-b\eta)sin(\phi+\frac{\pi}{4})\right]}\\
			&\times e^{\frac{i}{2b}\left[-2r\rho cos\left(\theta-\phi\right)\right] }\dfrac{1}{2\pi b}e^{i\frac{d}{2b}\tau^{2}}e^{\frac{i}{2b}\left[ \frac{\tau^{2}}{a}+\frac{\left(d\tau-b\eta\right)^{2}}{d}\right] }r\rm{d}r\rm{d}\theta.
		\end{split}  
	\end{align}
	Hence
	\begin{align}
		\begin{split}
			F^{A}(\rho,\phi)=O^{A}_{L}[f](\rho,\phi)=
			\int_{0}^{+\infty}\int_{0}^{2\pi}f(r,\theta)P_{A}(r,\theta;\rho,\phi)r\rm{d}r\rm{d}\theta,
		\end{split}  
	\end{align}	
	where
	\begin{align}
		\begin{split}
			P_{A}(r,\theta;\rho,\phi)&=e^{i\left[\frac{a}{2b}r^{2}+\frac{\sqrt{2}\tau r}{b}sin(\theta+\frac{\pi}{4})-\frac{\sqrt{2}\rho(d\tau-b\eta)}{b}sin(\phi+\frac{\pi}{4})-\frac{r\rho}{b}cos(\theta-\phi)+\frac{d}{2b}\rho^{2}\right] }\\
			&\times 
			\dfrac{1}{2\pi b}e^{i\left[\frac{\left(ad+1\right)\tau^{2}}{2ab}+\frac{\left(d\tau-b\eta\right)^{2}}{2bd}\right]},\\
		\end{split} 
	\end{align}	
	which completes the proof.
\end{proof}	

The inversion formula of the 2D OLCT in polar coordinates takes
\begin{align}
	\begin{split}
		f(r,\theta)=O^{A^{-1}}_{L}[F^{A}](r,\theta),
	\end{split}
\end{align}	
where $A^{-1}=(d,-b,-c,a,b\eta-d\tau,c\tau-a\eta)$ and $b\neq0$.\

As it is seen, when $A=(0,1,-1,0,0,0)$, the 2D OLCT can be reduced to the 2D FT
\begin{align}
	\begin{split}
		F(\rho,\phi)&=\mathcal{F}[f](\rho,\phi)\\
		&=\frac{1}{2\pi}\int_{0}^{+\infty}\int_{0}^{2\pi}f(r,\theta)e^{-ir\rho cos\left( \theta-\phi\right)}r\rm{d}r\rm{d}\theta.
	\end{split}
\end{align}	

It follows that there is a relation between the 2D OLCT and the 2D FT
\begin{align}
	\begin{split}
		F^{A}(\rho,\phi)=\frac{\ell_{A}}{b}e^{i\left[ \frac{d}{2b}\rho^{2}-\frac{\sqrt{2}\rho\left( d\tau-b\eta\right)}{b}sin\left(\phi+\frac{\pi}{4} \right)\right] } F[\tilde{f}]\left( \frac{\rho}{b},\phi\right), 
	\end{split} 
\end{align}	
where $b\neq0$, $\ell_{A}$ is given by (6) and $\tilde{f}(r,\theta)=e^{i\left[ \frac{d}{2b}r^{2}+\frac{\sqrt{2}\tau r}{b}sin(\theta+\frac{\pi}{4})\right] }f\left(r,\theta \right)$.

\subsection{Offset linear canonical Hankel transform}
\label{OLCHT}
In this subsection, we discuss definition of the 2D OLCHT in polar cooddinates. Inspired by the literature \cite{[25],[26],[27]}, under the premise that the transformation function has rotational symmetry, the OLCHT is obtained from the OLCT. 
\begin{defn}[OLCHT] 
	The $n$th-order OLCHT of the real parameters matrix of $A=(a,b,c,d,\tau,\eta)$ is defined by   
	\begin{align}
		\begin{split}
			H_{n}^{A}[f](\rho)=i^{n}\frac{w_{1}\ell_{A}}{b}e^{i\frac{d}{2b}\rho^{2}}\int_{0}^{+\infty}w_{2}e^{i\frac{a}{2b}r^{2}}J_{n}\left(\frac{r\rho}{b}\right)f(r)r\rm{d}r,
		\end{split}
	\end{align}	
	where the $J_{n}$ is the $n$th-order Bessel function of the first kind and order $n\geq-\frac{1}{2}$, $b\neq0$, $\ell_{A}$ is given by (6), and 
	\begin{align}
		\begin{split}
			w_{1}=\sum_{m=-\infty}^{+\infty}J_{m}\left( \frac{\sqrt{2}\rho\left(d\tau-b\eta \right) }{b}\right),
			w_{2}=\sum_{m=-\infty}^{+\infty}J_{m}\left( \frac{\sqrt{2}\tau r}{b}\right).
		\end{split}
	\end{align}	
\end{defn}

\begin{proof} From (5), it follows that
	\begin{align}
		\begin{split}
			F^{A}(\rho,\phi)&=\dfrac{\ell_{A}}{2\pi b}\int_{0}^{+\infty}\int_{0}^{2\pi}f(r,\theta)
			e^{i\left[\frac{a}{2b}r^{2}+\frac{\sqrt{2}\tau r}{b}sin(\theta+\frac{\pi}{4})\right] }\\
			&\times e^{i\left[ -\frac{\sqrt{2}\rho(d\tau-b\eta)}{b}sin(\phi+\frac{\pi}{4})-\frac{r\rho}{b}cos(\theta-\phi)+\frac{d}{2b}\rho^{2}\right] }r\rm{d}r\rm{d}\theta,    
		\end{split}
	\end{align}	
	where $\ell_{A}$ is given by (6).\\ 
	In view of the relation \cite{[28]}
	\begin{align}
		\begin{split}
			e^{-itsin\theta}=\sum_{m=-\infty}^{+\infty}J_{m}\left(t\right)e^{-im\theta}.
		\end{split}
	\end{align}	
	So we can obtain
	\begin{align}
		\begin{split}
			e^{i\frac{\sqrt{2}\tau r}{b}sin(\theta+\frac{\pi}{4})}=\sum_{m=-\infty}^{+\infty}J_{m}\left(\frac{\sqrt{2}\tau r}{b}\right)e^{im\left( \theta+\frac{\pi}{4}\right) },
		\end{split}
	\end{align}	
	\begin{align}
		\begin{split}
			e^{-i\frac{\sqrt{2}\rho(d\tau-b\eta)}{b}sin(\phi+\frac{\pi}{4})}=\sum_{m=-\infty}^{+\infty}J_{m}\left(\frac{\sqrt{2}\rho(d\tau-b\eta)}{b}\right)e^{-im\left( \phi+\frac{\pi}{4}\right) }.
		\end{split}
	\end{align}	
	When $f\left(r,\theta \right) $ is circularly symmetric, there is
	\begin{align}
		\begin{split}
			f\left(r,\theta \right)=f\left(r\right)e^{ik\theta}.
		\end{split}
	\end{align}	
	Let 
	\begin{align}
		\begin{split}
			B=\int_{0}^{2\pi}f(r,\theta)e^{i\left[\frac{\sqrt{2}\tau r}{b}sin(\theta+\frac{\pi}{4})-\frac{\sqrt{2}\rho(d\tau-b\eta)}{b}sin(\phi+\frac{\pi}{4})-\frac{r\rho}{b}cos(\theta-\phi)\right] }\rm{d}\theta.
		\end{split}
	\end{align}	
	Using (18), (19) and (20), we can derive the following result
	\begin{align}
		\begin{split}
			B&=\int_{0}^{2\pi}\sum_{m=-\infty}^{+\infty}\sum_{m=-\infty}^{+\infty}J_{m}\left(\frac{\sqrt{2}\tau r}{b}\right)J_{m}\left(\frac{\sqrt{2}\rho(d\tau-b\eta)}{b}\right)\\
			&\times e^{im\theta-im\phi-i\frac{r\rho}{b}cos(\theta-\phi)}e^{ik\theta}f(r)\rm{d}\theta.
		\end{split}
	\end{align}	
	According to the famous formula \cite{[24],[28]}
	\begin{align}
		\begin{split}
			J_{n}(x)=\frac{1}{2\pi}\int_{0}^{2\pi}e^{i\left(n\theta-xsin\theta\right) }\rm{d}\theta.
		\end{split}
	\end{align}	
	We have
	\begin{align}
		\begin{split}
			&\quad \;\int_{0}^{2\pi}e^{im\theta-im\phi-i\frac{r\rho}{b}cos(\theta-\phi)}\cdot e^{ik\theta}\rm{d}\theta\\
			&=\int_{0}^{2\pi}e^{im\theta-im\phi-i\frac{r\rho}{b}sin(\frac{\pi}{2}+\theta-\phi)}\cdot e^{ik\theta}\rm{d}\theta\\
			&=e^{-in(\frac{\pi}{2}+\theta-\phi)+im\theta-im\phi+ik\theta}\int_{0}^{2\pi}e^{i\left[ n(\frac{\pi}{2}+\theta-\phi)-\frac{r\rho}{b}sin(\frac{\pi}{2}+\theta-\phi)\right] }\rm{d}\theta\\
			&=2\pi J_{n}\left( \frac{r\rho}{b}\right) e^{i\left[ (m+k-n)\theta+i(n-m)\phi-i\frac{\pi}{2}n\right]}.
		\end{split}
	\end{align}	
	By making the change of $k=n-m$ in the above expression, we obtain
	\begin{align}
		\begin{split}
			\int_{0}^{2\pi}e^{im\theta-im\phi-i\frac{r\rho}{b}cos(\theta-\phi)}\cdot e^{ik\theta}\rm{d}\theta&=2\pi J_{n}\left( \frac{r\rho}{b}\right)e^{ik\phi-i\frac{\pi}{2}n}.
		\end{split}
	\end{align}	
	Hence
	\begin{align}
		\begin{split}
			F^{A}\left(\rho,\phi\right)&=i^{n}\dfrac{\ell_{A}}{2\pi b}\sum_{m=-\infty}^{+\infty}J_{m}\left(\frac{\sqrt{2}\rho(d\tau-b\eta)}{b}\right)e^{i\frac{d}{2b}\rho^{2}}\\
			&\times\int_{0}^{+\infty}\sum_{m=-\infty}^{+\infty}J_{m}\left(\frac{\sqrt{2}\tau r}{b}\right)e^{i\frac{a}{2b}r^{2}}2\pi J_{n}\left( \frac{r\rho}{b}\right)e^{ik\phi}r\rm{d}r.
		\end{split}
	\end{align}	
	The output function also has circular symmetry, we have 
	\begin{align}
		\begin{split}
			F^{A}\left(\rho,\phi\right)=F^{A}\left(\rho\right) e^{ik\phi}.
		\end{split}
	\end{align}	
	So
	\begin{align}
		\begin{split}
			&\;\quad H_{n}^{A}[f](\rho)=F^{A}\left(\rho\right)\\
			&=i^{n}\dfrac{\ell_{A}}{b}\sum_{m=-\infty}^{+\infty}J_{m}\left(\frac{\sqrt{2}\rho(d\tau-b\eta)}{b}\right)e^{i\frac{d}{2b}\rho^{2}}\\
			&\times\int_{0}^{+\infty}\sum_{m=-\infty}^{+\infty}J_{m}\left(\frac{\sqrt{2}\tau r}{b}\right)e^{i\frac{a}{2b}r^{2}}J_{n}\left( \frac{r\rho}{b}\right)r\rm{d}r.
		\end{split}
	\end{align}	
The proof is completed.
\end{proof}

The inversion formula of $n$th-order OLCHT takes
\begin{align}
	\begin{split}
		f(r)=H_{n}^{-A^{-1}}\left[ H_{n}^{A}\left[ f\right] \right] (r).
	\end{split}
\end{align}
It is evident that the $n$th-order OLCHT and its inverse with $A=(0,1,-1,0,0,0)$ reduces to the conventional $n$th-order Hankel transform (HT) \cite{[29]}
\begin{align}
	\begin{split}
		H_{n}[f](\rho)=\int_{0}^{+\infty}f(r)J_{n}\left(\rho r\right)r\rm{d}r,
	\end{split}
\end{align}	  
and the corresponding inversion formula
\begin{align}
	\begin{split}
		f(r)=H_{n}\left[ H_{n}\left[ f\right] \right] (r),
	\end{split}
\end{align}
respectively.\

\section{Relationship between the 2D OLCT and the OLCHT}
\label{2D Relationship}
Under the condition of rotational symmetry, the OLCHT can be deduced from the OLCT, and there is a close relationship between them.

\subsection{Relationship between the 2D OLCHT and the HT}
\label{HT}
The purpose of this section is to extend the above relationship to preparing for the OLCT domain. Let the function $f(r,\theta)$ and its 2D FT  $\left(F(\rho,\phi)\right)$ satisfy the Dirichlet conditions. Since they can angularly periodic with period $2\pi$, then their Fouries series are well-defined as
\begin{align}
	\begin{split}
		f(r,\theta)=\sum_{n=-\infty}^{+\infty}f_{n}\left( r\right) e^{in\theta},
	\end{split}
\end{align}	 
\begin{align}
	\begin{split}
		F(\rho,\phi)=\sum_{n=-\infty}^{+\infty}F_{n}\left( r\right) e^{in\phi}.
	\end{split}
\end{align}	  

As we all know, the $n$th term in Fourier series of primitive functions and their 2D FT versions $f_{n}(r)$ and $F_{n}(\rho)$, forming an $n$th-order HT pair \cite{[29],[30],[31]}
\begin{align}
	\begin{split}
		F_{n}(\rho)=i^{-n}H_{n}[f_{n}](\rho),
	\end{split}
\end{align}	
\begin{align}
	\begin{split}
		f_{n}(r)=i^{n}H_{n}[F_{n}](r).
	\end{split}
\end{align}	

From the above, (14) and (30), we can obtain the relationship between the OLCHT and the HT, it follows that 
\begin{align}
	\begin{split}
		H_{n}^{A}[f](\rho)=i^{n}\frac{w_{1}\ell_{A}}{b}e^{i\frac{d}{2b}\rho^{2}}H_{n}[\tilde{f}]\left( \frac{\rho}{b}\right),
	\end{split}
\end{align}	
where $w_{1}, w_{2}$ is given by (15), and $\ell_{A}$ is given by (6)
\begin{align}
	\begin{split}
		\tilde{f}(r)=w_{2}e^{i\frac{a}{2b}r^{2}}f(r). 
	\end{split}
\end{align}	
\textbf{Remark 1.}
If it is assumed that $f$ is radially symmetric, then it can be written as a function of $r$ only and can thus be taken out of the integration over the angular coordinate.\\
\textbf{Remark 2.}
 When the function $f(r,\theta)$ is not radially symmetric and is a function of both $r$ and $\theta$, the preceding result can be generalized. Since $f(r,\theta)$ depends on the angle $\theta$, it can be expanded into a Fourier series (32) and (33).
\subsection{Relationship between the 2D OLCT and the OLCHT}
\label{between}
Next, we will show that the relation between the 2D OLCT and the OLCHT.
\begin{thm}
	Let the function $f(r,\theta)$ satisfy Dirichlet conditions, be angularly periodic in $2\pi$, and have a Fourier expansion
	\begin{align}
		\begin{split}
			f(r,\theta)=\sum_{n=-\infty}^{+\infty}f_{n}\left( r\right) e^{in\theta},    
		\end{split}
	\end{align}
	then the Fourier series expansion of the 2D OLCT of $f(r,\theta)$ has a form 
	\begin{align}
		\begin{split}
			F^{A}(\rho,\phi)=\sum_{n=-\infty}^{+\infty}H_{n}^{A}\left[ f_{n}\right] \left(\rho\right) e^{in\phi}.    
		\end{split}
	\end{align} 
\end{thm}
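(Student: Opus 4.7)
The plan is to substitute the Fourier expansion of $f(r,\theta)$ directly into the polar definition (5) of the 2D OLCT, interchange summation and integration, and then recognize each resulting summand as a special case of the computation already carried out in the proof of Definition 2.

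Concretely, I would first write
\begin{align*}
F^{A}(\rho,\phi) = \int_{0}^{+\infty}\!\int_{0}^{2\pi}\!\biggl(\sum_{n=-\infty}^{+\infty} f_{n}(r)\,e^{in\theta}\biggr) P_{A}(r,\theta;\rho,\phi)\, r\,\mathrm{d}r\,\mathrm{d}\theta,
\end{align*}
and then, under the Dirichlet conditions assumed on $f$, exchange the order of $\sum$ and the double integral. This interchange is legitimate by Fubini once one verifies sufficient absolute integrability; equivalently, by a dominated convergence argument applied to the partial sums, using that $|P_{A}(r,\theta;\rho,\phi)| = (2\pi b)^{-1}$ is bounded in $\theta$. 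One thus arrives at
\begin{align*}
F^{A}(\rho,\phi) = \sum_{n=-\infty}^{+\infty} \int_{0}^{+\infty}\!\int_{0}^{2\pi} f_{n}(r)\,e^{in\theta}\,P_{A}(r,\theta;\rho,\phi)\, r\,\mathrm{d}r\,\mathrm{d}\theta.
\end{align*}

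For each fixed $n$, the summand is precisely the 2D OLCT of the single angular mode $f_{n}(r)\,e^{in\theta}$, which matches the ansatz (20) with $k$ replaced by $n$. The angular integral can then be evaluated exactly as in the proof of Definition 2: insert the Jacobi--Anger expansions of $e^{i(\sqrt{2}\tau r/b)\sin(\theta+\pi/4)}$ and $e^{-i(\sqrt{2}\rho(d\tau-b\eta)/b)\sin(\phi+\pi/4)}$ supplied by (17), apply the Bessel integral representation (23) to the resulting $\theta$-integral, and execute the same index shift $k\mapsto n-m$ that appears between (23) and (25). The outcome is exactly $H_{n}^{A}[f_{n}](\rho)\,e^{in\phi}$, by the very formula (14) defining the OLCHT. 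Summing over $n$ then delivers the claimed expansion.

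The main obstacle is the rigorous interchange of the infinite sum with the double integral, since the kernel $P_{A}$ does not decay as $r\to\infty$ and the Fourier series of $f$ converges only pointwise under the bare Dirichlet hypothesis. In practice this is handled by imposing mild summability of the Fourier coefficients, say $\sum_{n}\int_{0}^{\infty}|f_{n}(r)|\,r\,\mathrm{d}r < \infty$, which is implicit in the standing assumption that (5) defines a convergent integral. Beyond that, the remaining work is purely algebraic bookkeeping already dispatched in the single-mode calculation of Definition 2, with the free Fourier index $n$ playing the role of the symmetry index $k$; no new technical hurdle appears.
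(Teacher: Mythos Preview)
Your proposal is correct and follows essentially the same route as the paper: insert the Fourier expansion into the polar OLCT integral, swap sum and integral, and reduce each mode to the single-mode computation underlying the OLCHT. The only cosmetic difference is that the paper redoes the angular integral in place via the Jacobi--Anger expansion $e^{i(r\rho/b)\cos(\theta-\phi)}=\sum_{v}i^{v}J_{v}(r\rho/b)e^{-iv(\phi-\theta)}$ (their (41)) rather than citing the Definition~2 derivation, but the algebra and the index matching are the same.
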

\begin{proof}
	The 2D OLCT of $f(r,\theta)$ as expanded in (38), takes
	\begin{align}
		\begin{split}
			F^{A}(\rho,\phi)&=\sum_{n=-\infty}^{+\infty}\int_{0}^{+\infty}\int_{0}^{2\pi}e^{in\theta}f_{n}(r)P_{A}(r,\theta;\rho,\phi)r\rm{d}r\rm{d}\theta\\
			&=\sum_{n=-\infty}^{+\infty}\int_{0}^{+\infty}\underbrace{e^{i\left[ 	\frac{\sqrt{2}\tau r}{b}sin(\theta+\frac{\pi}{4})-\frac{\sqrt{2}\rho\left(d\tau-b\eta \right) }{b}sin(\phi+\frac{\pi}{4})-\frac{r\rho}{b} cos\left(\theta-\phi\right) \right]}e^{in\theta}}_{D(r,\theta;\rho,\phi)}\\
			&\times\frac{\ell_{A}}{2\pi b}e^{i\left(\frac{a}{2b}r^{2}+\frac{d}{2b}\rho^{2} \right) }\int_{0}^{2\pi} f_{n}(r) r\rm{d}r\rm{d}\theta.
		\end{split}
	\end{align}
	According to the celebrated exponent expansion formula \cite{[28]}
	\begin{align}
		\begin{split}
			e^{i\frac{r\rho}{b} cos\left(\theta-\phi\right)}=e^{i\frac{r\rho}{b} cos\left(\phi-\theta\right)}=\sum_{v=-\infty}^{+\infty}i^{v}J_{v}\left(\frac{r\rho}{b}\right)e^{-iv\left( \phi-\theta\right) }.
		\end{split}
	\end{align} 
	Using (18), (19) and (41), we get
	\begin{align}
		\begin{split}
			D(r,\theta;\rho,\phi)&=\sum_{v=-\infty}^{+\infty}i^{v}J_{v}\left(\frac{r\rho}{b}\right)w_{1}w_{2}e^{im\theta-im\phi-iv\left( \theta-\phi\right)+in\theta}\\
			&=\sum_{v=-\infty}^{+\infty}i^{v}J_{v}\left(\frac{r\rho}{b}\right)w_{1}w_{2}e^{i\left( m-v+n\right)\theta -i\left( m-v\right)\phi},
		\end{split}
	\end{align} 
	where $w_{1}, w_{2}$ is given by (15). \\
	For $n=v-m$, (42) can be written as
	\begin{align}
		\begin{split}
			D(r,\theta;\rho,\phi)=\sum_{v=-\infty}^{+\infty}i^{v}J_{v}\left(\frac{r\rho}{b}\right)w_{1}w_{2}e^{in\phi}.
		\end{split}
	\end{align} 
	Hence
	\begin{align}
		\begin{split}
			&\;\quad F^{A}(\rho,\phi)\\
			&=\sum_{n=-\infty}^{+\infty}\int_{0}^{+\infty}\frac{\ell_{A}}{2\pi b}e^{i\left(\frac{a}{2b}r^{2}+\frac{d}{2b}\rho^{2} \right) }\sum_{v=-\infty}^{+\infty}i^{v}J_{v}\left(\frac{r\rho}{b}\right)w_{1}w_{2}e^{in\phi}f_{n}(r)r\rm{d}r\int_{0}^{2\pi}\rm{d}\theta\\
			&=\sum_{n=-\infty}^{+\infty}\sum_{v=-\infty}^{+\infty}i^{v}\frac{w_{1}\ell_{A}}{b}e^{i\frac{d}{2b}\rho^{2}}\int_{0}^{+\infty}w_{2}e^{i\frac{a}{2b}r^{2}}J_{v}\left(\frac{r\rho}{b}\right)f_{n}(r)e^{in\phi}r\rm{d}r\\
			&=\sum_{n=-\infty}^{+\infty}i^{n}\frac{w_{1}\ell_{A}}{b}e^{i\frac{d}{2b}\rho^{2}}\int_{0}^{+\infty}w_{2}e^{i\frac{a}{2b}r^{2}}J_{n}\left(\frac{r\rho}{b}\right)f_{n}(r)e^{in\phi}r\rm{d}r\\
			&=\sum_{n=-\infty}^{+\infty}H_{n}^{A}\left[ f_{n}\right]\left(\rho \right)e^{in\phi}.  
		\end{split}
	\end{align}
	The proof is completed. 
\end{proof}
Through the above series of discussions, the Fourier series expansion of the 2D OLCT $F^{A}(\rho,\phi)$ has a form
\begin{align}
	\begin{split}
		F^{A}(\rho,\phi)=\sum_{n=-\infty}^{+\infty}F_{n}^{A}\left( r\right) e^{in\phi}.    
	\end{split}
\end{align}
According to Theorem 1 and (45), we get
\begin{align}
	\begin{split}
		F^{A}_{n}(\rho)=H_{n}^{A}\left[ f_{n}\right](\rho).   
	\end{split}
\end{align}

\section{The properties of 2D OLCT in polar cooddinates}
\label{The properties}

The 1D OLCT has many important properties, such as spatial shift and convolution. Similarly, the 2D OLCT also has some important nature. In this section, we use two theorems to understand the properties of the 2D OLCT in polar coordinates, namely spatial shift theorem and convolution theorem, respectively.  
\subsection{Spatial shift theorem}
\label{Spatial}

See \cite{[18]} for the spatial shift properties of the 1D OLCT. Below we give the spatial shift theorem of the 2D OLCT in polar coordinates.
\begin{thm}[Spatial shift theorem]
	Let $\mathbf{t_{0}}=(t_{3},t_{4})$, $t_{3}=r_{0}cos\theta_{0}$, $t_{4}=r_{0}sin\theta_{0}$, using polar coordinates. Then we have
	\begin{align}
		\begin{split}
			f\left(\mathbf{t}-\mathbf{t_{0}} \right) &=(-i)^{\frac{n}{2}}2\pi cw_{3}\gamma\left(r,\theta,r_{0},\theta_{0} \right)\sum_{n=-\infty}^{+\infty}\\
			&\times e^{in(\theta-\theta_{0})}\int_{0}^{+\infty}H_{n}^{A}\left[f_{n} \right] (\rho)e^{-i\frac{d}{2b}\left( \rho^{2}\right)}\\
			&\times J_{n}\left( \frac{r\rho}{b}\right) J_{n}\left( \frac{r_{0}\rho}{b}\right) J_{n}\left( \frac{\sqrt{2}\rho\left( b\eta-d\tau\right) }{b}\right) \rho\rm{d}\rho,
		\end{split}
	\end{align}	
	where
	$$\gamma\left(r,\theta,r_{0},\theta_{0} \right) = e^{i\left[ -\frac{a}{2b}\left( r^{2}+r_{0}^{2}\right) +\frac{arr_{0}}{b}cos(\theta-\theta_{0})+\frac{\sqrt{2}\tau r(bc-ad)}{b}sin(\theta+\frac{\pi}{4})-\frac{\sqrt{2}\tau r_{0}(bc-ad)}{b}sin(\theta_{0}+\frac{\pi}{4}) \right]},$$ 
	$$w_{3}=e^{-i\left[\frac{\left( b\eta-d\tau\right) ^{2}}{2bd} +\frac{\tau^{2}\left( bc-ad \right) ^{2}}{2ab} \right]}.$$ 
\end{thm}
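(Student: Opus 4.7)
The plan is to invert the 2D OLCT of $f$ at the shifted point and reorganize the resulting double integral via Jacobi--Anger expansions. Starting from the inverse OLCT formula (4) rewritten in polar coordinates, I would express
\begin{align*}
f(\mathbf{t}-\mathbf{t_0}) = C\int_0^{+\infty}\!\!\int_0^{2\pi} F^A(\rho,\phi)\, P_{A^{-1}}(\rho,\phi;\mathbf{t}-\mathbf{t_0})\,\rho\, d\rho\, d\phi,
\end{align*}
where $P_{A^{-1}}$ is the kernel from Definition 1 with parameter list $A^{-1}=(d,-b,-c,a,b\eta-d\tau,c\tau-a\eta)$. Rather than passing to the polar coordinates of $\mathbf{t}-\mathbf{t_0}$ directly, I would keep $(r,\theta)$ and $(r_0,\theta_0)$ separate by expanding $|\mathbf{t}-\mathbf{t_0}|^{2}=r^{2}+r_{0}^{2}-2rr_{0}\cos(\theta-\theta_{0})$ and $\mathbf{u}\cdot(\mathbf{t}-\mathbf{t_0}) = \rho r\cos(\theta-\phi) - \rho r_0\cos(\theta_0-\phi)$ inside the kernel. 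All factors not involving $\phi$ collect into $\gamma(r,\theta,r_0,\theta_0)$ and $w_3$, while the factor $e^{-id\rho^{2}/(2b)}$ arising from the $A^{-1}$-kernel supplies the claimed $e^{-id\rho^{2}/(2b)}$ inside the radial integral.

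Next I would invoke Theorem 1 to replace $F^A(\rho,\phi)$ by $\sum_n H_n^A[f_n](\rho)e^{in\phi}$, reducing the $\phi$-integral to
\begin{align*}
\int_0^{2\pi} e^{in\phi}\,e^{-i\tfrac{\sqrt 2\rho(d\tau-b\eta)}{b}\sin(\phi+\pi/4)}\,e^{i\tfrac{r\rho}{b}\cos(\theta-\phi)}\,e^{-i\tfrac{r_0\rho}{b}\cos(\theta_0-\phi)}\,d\phi.
\end{align*}
Applying the Jacobi--Anger expansion (16) to each oscillatory factor (rewriting the two cosines as sines via $\cos\alpha=\sin(\pi/2+\alpha)$ so that the integral representation (22) is directly applicable) produces a triple sum over three indices whose $\phi$-integral enforces a single linear selection rule on those indices. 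After enforcing the rule and using the parity $J_{-k}(x)=(-1)^kJ_k(x)$, the residual double sum should collapse, via a Bessel-convolution identity analogous to the calculation leading from (21)--(25) in the OLCHT derivation, into the symmetric product $J_n(r\rho/b)\,J_n(r_0\rho/b)\,J_n(\sqrt{2}\rho(b\eta-d\tau)/b)$ together with the angular phase $e^{in(\theta-\theta_0)}$.

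The main obstacle is precisely this combinatorial bookkeeping: three oscillatory angular factors, each carrying a full Bessel series, must collapse under the single angular selection rule into a clean singly-indexed triple product with no residual free summations. I also expect nontrivial bookkeeping in tracking the $i^n$ phases produced both by the $\cos\to\sin$ rewrites and by the defining integral (22), and verifying that they combine with the constant $C$ from (4) and the radial chirp of the inverse kernel to give the prefactor $(-i)^{n/2}\,2\pi c\,w_{3}$ exactly as stated. A useful sanity check at the end is the degenerate case $n=0$ with $\tau=\eta=0$, which should reproduce the classical Hankel-domain shift identity for the LCT; this will pin down any stray signs or normalizations before the general formula is asserted.
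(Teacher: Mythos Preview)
Your proposal is essentially the same strategy as the paper's proof: start from the inverse OLCT (equation (4)/(11)) evaluated at the shifted argument, pass to polar coordinates keeping $(r,\theta)$ and $(r_0,\theta_0)$ separate, peel off the $\phi$-independent factors into $\gamma$ and $w_3$, substitute the Fourier--OLCHT decomposition of $F^{A}$ from Theorem~1, and then apply Jacobi--Anger to each of the three oscillatory $\phi$-factors so that the $\phi$-integral enforces an index constraint collapsing everything to the triple Bessel product. The paper carries out exactly these steps (equations (48)--(55)), though it handles the index collapse you flag as the ``main obstacle'' rather summarily by simply setting the remaining free indices equal; your more careful bookkeeping and the $\tau=\eta=0$ sanity check are worthwhile additions, since the paper's own tracking of the $i$-powers and the status of the prefactor $(-i)^{n/2}$ (with $n$ a summation index appearing outside the sum) is loose.
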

\begin{proof}
	According to (11), it follows that
	\begin{align}
		\begin{split}
			f\left(\mathbf{t}-\mathbf{t_{0}} \right) &=C\int_{-\infty}^{+\infty}\int_{-\infty}^{+\infty}F^{A}(\mathbf{u})K_{A^{-1}}e^{i\left[-\frac{d}{2b}|\mathbf{u}|^{2}+\frac{1}{b}\mathbf{u}\left(\mathbf{t}-\mathbf{t_{0}}\right) \right]}\\ 
			&\times e^{i\left[ -\frac{1}{b}\mathbf{u}\left(b\eta-d\tau\right)+\frac{\left(\mathbf{t}-\mathbf{t_{0}} \right)\tau }{b}\left(bc-ad \right)-\frac{a}{2b}|\mathbf{t}-\mathbf{t_{0}}|^{2}\right] }\rm{d}\mathbf{u}.
		\end{split}
	\end{align}	
	Using polar coordinates $t_{1}=rcos\theta$, $t_{2}=rsin\theta$, $t_{3}=r_{0}cos\theta_{0}$, $t_{4}=r_{0}sin\theta_{0}$, $u_{1}=\rho cos\phi$, $u_{2}=\rho sin\phi$, we obtain 
	\begin{align}
		\begin{split}
			E&\stackrel{\triangle}{=}e^{-\frac{i}{2b}\left[d|\mathbf{u}|^{2}+2(b\eta-d\tau)\mathbf{u}+a|\mathbf{t}-\mathbf{t_{0}}|^{2}- 2\tau\left( \mathbf{t}-\mathbf{t_{0}}\right) (bc-ad)-2\mathbf{u}\mathbf{t}+2\mathbf{u}\mathbf{t_{0}}\right]}\\
			&=e^{-\frac{i}{2b}\left[d|\mathbf{u}|^{2}+2(b\eta-d\tau)\mathbf{u}+a|\mathbf{t}-\mathbf{t_{0}}|^{2}- 2\tau\left( \mathbf{t}-\mathbf{t_{0}}\right) (bc-ad)\right]}e^{\frac{i}{2b}\left[ 2\rho rcos\left( \theta-\phi\right) -2\rho r_{0}cos\left( \theta_{0}-\phi\right)\right] }\\
			&=e^{-\frac{i}{2b}\left[ \left(\sqrt{d}\mathbf{u}+\frac{b\eta-d\tau}{\sqrt{d}}\right) ^{2}-\frac{\left( b\eta-d\tau\right) ^{2}}{d}+	\left(\sqrt{a}\left( \mathbf{t}-\mathbf{t_{0}}\right) -\frac{\tau\left( bc-ad\right) }{\sqrt{a}}\right) ^{2}-\frac{\tau^{2}\left(bc-ad \right) ^{2}}{a}\right]}\\
			&\times e^{\frac{i}{2b}\left[ 2\rho rcos\left( \theta-\phi\right) -2\rho r_{0}cos\left( \theta_{0}-\phi\right)\right] }\\
			&=e^{i\left[ -
				\frac{d}{2b}\rho^{2} 	
				-\frac{a}{2b}\left( r^{2}+r_{0}^{2}\right) +\frac{arr_{0}}{b}cos(\theta-\theta_{0})-\frac{\sqrt{2}\rho (b\eta-d\tau)}{b}sin(\phi+\frac{\pi}{4})+\frac{\sqrt{2}r\tau (bc-ad)}{b}sin(\theta+\frac{\pi}{4}) \right] }\\
			&\times e^{i\left[-\frac{\sqrt{2}\tau r_{0}(bc-ad)}{b}sin(\theta_{0}+\frac{\pi}{4})-\frac{\left( b\eta-d\tau\right)^{2} }{2bd}-\frac{\tau^{2}\left( bc-ad\right)^{2} }{2ab} \right] }e^{\frac{i}{2b}\left[ 2\rho rcos\left( \theta-\phi\right) -2\rho r_{0}cos\left( \theta_{0}-\phi\right)\right] }\\
			&=e^{i\left[ -
				\frac{d}{2b}\rho^{2} 	
				-\frac{a}{2b}\left( r^{2}+r_{0}^{2}\right) +\frac{arr_{0}}{b}cos(\theta-\theta_{0})-\frac{\sqrt{2}\rho (b\eta-d\tau)}{b}sin(\phi+\frac{\pi}{4})+\frac{\sqrt{2}r\tau (bc-ad)}{b}sin(\theta+\frac{\pi}{4}) \right] }\\
			&\times e^{i\left[-\frac{\sqrt{2}\tau r_{0}(bc-ad)}{b}sin(\theta_{0}+\frac{\pi}{4}) \right] }\underbrace{e^{-i\left[ \frac{\left( b\eta-d\tau\right)^{2} }{2bd}+\frac{\tau^{2}\left( bc-ad\right)^{2} }{2ab}\right] }}_{w_{3}}e^{\frac{i}{2b}\left[ 2\rho rcos\left( \theta-\phi\right) -2\rho r_{0}cos\left( \theta_{0}-\phi\right)\right] }\\
			&=w_{3}e^{i\left[ -
				\frac{d}{2b}\rho^{2} 	
				-\frac{a}{2b}\left( r^{2}+r_{0}^{2}\right) +\frac{arr_{0}}{b}cos(\theta-\theta_{0})+\frac{\sqrt{2}r\tau (bc-ad)}{b}sin(\theta+\frac{\pi}{4}) -\frac{\sqrt{2}\tau r_{0}(bc-ad)}{b}sin(\theta_{0}+\frac{\pi}{4})\right] }\\
			&\times \underbrace{e^{\frac{i}{2b}\left[ -2\sqrt{2}\rho (b\eta-d\tau)sin(\phi+\frac{\pi}{4})+2\rho rcos\left( \theta-\phi\right) -2\rho r_{0}cos\left( \theta_{0}-\phi\right)\right] }}_{w_{4}}.
		\end{split}
	\end{align}	
	According to the celebrated exponent expansion formula (17), split each item of $w_{4}$, we have
	\begin{align}
		\begin{split}
			e^{i\frac{r\rho}{b} cos\left(\theta-\phi\right)}=\sum_{m=-\infty}^{+\infty}i^{m}J_{m}\left(\frac{r\rho}{b}\right)e^{im\left( \theta-\phi\right)},
		\end{split}
	\end{align}
	\begin{align}
		\begin{split}
			e^{-i\frac{r_{0}\rho}{b} cos\left(\theta_{0}-\phi\right)}=\sum_{m=-\infty}^{+\infty}i^{-m}J_{m}\left(\frac{r_{0}\rho}{b}\right)e^{-im\left( \theta_{0}-\phi\right)},
		\end{split}
	\end{align}	
and
	\begin{align}
		\begin{split}
			e^{-i\frac{\sqrt{2}\rho (b\eta-d\tau)}{b} sin(\phi+\frac{\pi}{4})}&=\sum_{n=-\infty}^{+\infty}J_{n}\left(\frac{\sqrt{2}\rho (b\eta-d\tau)}{b}\right)e^{-in\left(\phi+\frac{\pi}{4} \right)}.
		\end{split}
	\end{align}	
	Using (39), (49), (50), (51) and (52), we obtain
	\begin{align}
		\begin{split}
			&\;\quad f\left(\mathbf{t}-\mathbf{t_{0}} \right)\\ &=Cw_{3}K_{A^{-1}}\int_{0}^{+\infty}\sum_{m=-\infty}^{+\infty}J_{m}\left(\frac{r\rho}{b}\right)J_{m}\left(\frac{r_{0}\rho}{b}\right)     \sum_{n=-\infty}^{+\infty}J_{n}\left(\frac{\sqrt{2}\rho (b\eta-d\tau)}{b}\right) \\ &\times\sum_{n=-\infty}^{+\infty}H_{n}^{A}\left[ f_{n}\right] \left(\rho \right)e^{im\left( \theta-\theta_{0}\right)-in\frac{\pi}{4} }\int_{0}^{2\pi}\rm{d}\phi\rho\rm{d}\rho
			\\ &\times e^{i\left[ -
				\frac{d}{2b}\rho^{2} 	
				-\frac{a}{2b}\left( r^{2}+r_{0}^{2}\right) +\frac{arr_{0}}{b}cos(\theta-\theta_{0})+\frac{\sqrt{2}r\tau (bc-ad)}{b}sin(\theta+\frac{\pi}{4}) -\frac{\sqrt{2}\tau r_{0}(bc-ad)}{b}sin(\theta_{0}+\frac{\pi}{4})\right] }.
		\end{split}
	\end{align}	
	Let $n=m$, we get
	\begin{align}
		\begin{split}
			&\;\quad f\left(\mathbf{t}-\mathbf{t_{0}} \right)\\ 
			&=\left(-i \right)^{\frac{n}{2}} 2\pi Cw_{3}\int_{0}^{+\infty}\sum_{n=-\infty}^{+\infty}J_{n}\left(\frac{r\rho}{b}\right)J_{n}\left(\frac{r_{0}\rho}{b}\right)J_{n}\left(\frac{\sqrt{2}\rho (b\eta-d\tau)}{b}\right) \\
			&\times e^{i\left[ -
				\frac{d}{2b}\rho^{2} 	
				-\frac{a}{2b}\left( r^{2}+r_{0}^{2}\right) +\frac{arr_{0}}{b}cos(\theta-\theta_{0})+\frac{\sqrt{2}r\tau (bc-ad)}{b}sin(\theta+\frac{\pi}{4}) -\frac{\sqrt{2}\tau r_{0}(bc-ad)}{b}sin(\theta_{0}+\frac{\pi}{4})\right] } \\
			&\times H_{n}^{A}\left[ f_{n}\right] \left(\rho \right)e^{in\left( \theta-\theta_{0}\right) }\rho\rm{d}\rho.
		\end{split}
	\end{align}	
	Change the integral and summation signs, we have
	\begin{align}
		\begin{split}
			&\;\quad f\left(\mathbf{t}-\mathbf{t_{0}} \right)\\ &=\left(-i \right)^{\frac{n}{2}} 2\pi Cw_{3}\sum_{n=-\infty}^{+\infty}e^{in\left( \theta-\theta_{0}\right)}\\
			&\times e^{i\left[ -
				\frac{d}{2b}\rho^{2} 	
				-\frac{a}{2b}\left( r^{2}+r_{0}^{2}\right) +\frac{arr_{0}}{b}cos(\theta-\theta_{0})+\frac{\sqrt{2}r\tau (bc-ad)}{b}sin(\theta+\frac{\pi}{4}) -\frac{\sqrt{2}\tau r_{0}(bc-ad)}{b}sin(\theta_{0}+\frac{\pi}{4}) \right] }\\
			& \times \int_{0}^{+\infty}H_{n}^{A}\left[ f_{n}\right] \left(\rho \right) J_{n}\left(\frac{r\rho}{b}\right)J_{n}\left(\frac{r_{0}\rho}{b}\right)J_{n}\left(\frac{\sqrt{2}\rho (b\eta-d\tau)}{b}\right)\rho\rm{d}\rho. 
		\end{split}
	\end{align}	
	The proof is completed.  \end{proof}

\subsection{Convolution theorem}
Let us give the convolution concept of the 2D OLCT in polar coordinates as follows:
\begin{defn} 
	The convolution operation $\ast^{A}$ of the 2D OLCT in polar coordinates for two function $f\left( r,\theta\right) $ and $g\left( r,\theta\right)$ is defined by   
	\begin{align}
		\begin{split}
			\left(f\ast^{A}g \right) \left(r,\theta \right)  &=\frac{e^{-i\left[ \frac{a}{2b}r^{2}+\frac{\sqrt{2}\tau r}{b} sin\left(\theta+\frac{\pi}{4} \right)\right]  }}{2\pi}\int_{0}^{+\infty}\int_{0}^{2\pi}e^{i\left[ \frac{a}{2b}r_{0}^{2}+\frac{\sqrt{2}\tau r_{0}}{b} sin\left(\theta_{0}+\frac{\pi}{4} \right)\right]  }\\
			&\times F^{-1}\left[ e^{-i\rho r_{0}cos\left(\phi-\theta_{0} \right) }F^{A}\left(b\rho,\phi \right) \right](r,\theta) r_{0}\rm{d}r_{0}\rm{d}\theta_{0}.
		\end{split}
	\end{align}	
\end{defn}

In the following, we obtain the convolution theorem of the 2D OLCT in polar cooddinates.
\begin{thm}[Convolution theorem]
	Let $z\left(r,\theta \right)=\left(f\ast^{A}g \right) \left(r,\theta \right)$. Then, there is a relation
	\begin{align}
		\begin{split}
			Z^{A}\left( \rho,\phi\right) =F^{A}\left( \rho,\phi\right)G^{A}\left( \rho,\phi\right),
		\end{split}
	\end{align}	
	where $Z^{A}\left( \rho,\phi\right)$, $F^{A}\left( \rho,\phi\right)$ and $G^{A}\left( \rho,\phi\right)$ denote the 2D OLCT in polar coordinates of the functions $z(r,\theta)$, $f(r,\theta)$ and $g(r,\theta)$, respectively.	 
\end{thm}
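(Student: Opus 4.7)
The plan is to apply the OLCT defining integral (5) to $z = f \ast^A g$ and exploit the chirp-conjugate design of Definition 3, so that the whole computation reduces to two nested polar Fourier transforms which are then re-assembled into $F^A$ and $G^A$ via the OLCT$\leftrightarrow$FT relation (13). Substituting (56) into
\[
Z^A(\rho,\phi)=\int_0^{+\infty}\!\!\int_0^{2\pi} z(r,\theta)\, P_A(r,\theta;\rho,\phi)\, r\,dr\,d\theta,
\]
the first observation is that the prefactor $e^{-i[\frac{a}{2b}r^2+\frac{\sqrt{2}\tau r}{b}\sin(\theta+\pi/4)]}$ appearing in $(f\ast^A g)(r,\theta)$ exactly cancels the matching $r$-dependent chirp sitting inside the OLCT kernel $P_A$ of (6). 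After this cancellation, the only $(r,\theta)$-dependence remaining outside the inner $F^{-1}$ is the pure Fourier kernel $e^{-i\frac{r\rho}{b}\cos(\theta-\phi)}$, accompanied by the global factor $\frac{\ell_A}{2\pi b}\,e^{i[\frac{d}{2b}\rho^2-\frac{\sqrt{2}\rho(d\tau-b\eta)}{b}\sin(\phi+\pi/4)]}$.

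Next I would exchange the order of integration so that the $(r,\theta)$ integral is evaluated first, against the inner $F^{-1}$. Comparing with the FT formula (12), this integral is precisely the polar Fourier transform of $F^{-1}\!\left[e^{-i\rho' r_0\cos(\phi'-\theta_0)}F^A(b\rho',\phi')\right]$ evaluated at the scaled frequency $\rho/b$, and by Fourier inversion it collapses to $e^{-i\frac{\rho}{b}r_0\cos(\phi-\theta_0)}\,F^A(\rho,\phi)$ (up to a $2\pi$ normalisation that cancels the $\frac{1}{2\pi}$ in Definition 3). Since $F^A(\rho,\phi)$ no longer depends on $(r_0,\theta_0)$, it factors out of the remaining integral.

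The final step is to identify what survives of the $r_0\,dr_0\,d\theta_0$ integral as $G^A(\rho,\phi)$. Writing $\chi_A(r,\theta)=e^{i[\frac{a}{2b}r^2+\frac{\sqrt{2}\tau r}{b}\sin(\theta+\pi/4)]}$, the weight $\chi_A(r_0,\theta_0) g(r_0,\theta_0)$ integrated against $e^{-i\frac{\rho}{b}r_0\cos(\phi-\theta_0)}$ is, by (12), the ordinary polar Fourier transform of $\tilde g(r,\theta)=\chi_A(r,\theta)g(r,\theta)$ evaluated at $(\rho/b,\phi)$. Multiplying by the still-surviving prefactor $\frac{\ell_A}{b}e^{i[\frac{d}{2b}\rho^2-\frac{\sqrt{2}\rho(d\tau-b\eta)}{b}\sin(\phi+\pi/4)]}$ reconstructs $G^A(\rho,\phi)$ exactly through the relation (13). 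Combined with the already-factored $F^A(\rho,\phi)$, this yields $Z^A(\rho,\phi)=F^A(\rho,\phi)\,G^A(\rho,\phi)$.

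The only serious hurdle is the bookkeeping: making sure the constants $\ell_A$, $1/b$, $1/(2\pi)$, the quadratic chirps $\frac{d}{2b}\rho^2$ and $\frac{a}{2b}r_0^2$, and the linear offsets involving $\sin(\phi+\pi/4)$ and $\sin(\theta_0+\pi/4)$ all line up so that \emph{exactly one} copy of the OLCT-to-FT conversion factor in (13) survives to merge with the residual Fourier transform of $\tilde g$, leaving no spurious phase or scale. Definition 3 is essentially engineered so that this collapse happens, and once the chirp cancellation in Step 1 and the Fourier inversion in Step 2 have been performed, the third step is forced by (13). No deeper analytic obstruction is anticipated beyond this careful phase accounting.
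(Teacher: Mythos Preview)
Your proposal is correct and takes a genuinely different route from the paper. The paper works entirely through the Fourier--series/Hankel machinery of Theorem~1: it expands $e^{-i\rho r_0\cos(\phi-\theta_0)}$, $F^A(b\rho,\phi)$ and the inverse-FT kernel via Jacobi--Anger into triple Bessel series, uses orthogonality $\int_0^{2\pi}e^{i(m+n-k)\phi}d\phi=2\pi\delta_{m+n-k}$ to collapse the sums, then passes to the $k$th Fourier coefficient $\tilde z_k(r)$, applies the Hankel transform to reach $H_k[\tilde z_k](\rho)=\sum_m H_m[\tilde g_m](\rho)\,H^A_{k-m}[f_{k-m}](b\rho)$, converts each $H_m$ into $H^A_m$ via (36), and finally invokes the ``convolution of Fourier coefficients $\Leftrightarrow$ product of functions'' principle to conclude $Z^A=F^A G^A$. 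Your argument bypasses all of this: after the chirp cancellation, the $(r,\theta)$ integral is just $\mathcal F\circ\mathcal F^{-1}$ evaluated at $\rho/b$, which immediately returns $e^{-i\frac{\rho}{b}r_0\cos(\phi-\theta_0)}F^A(\rho,\phi)$, and the surviving $(r_0,\theta_0)$ integral is then literally the OLCT defining integral for $g$. What the paper's approach buys is an explicit formula at the level of OLCHT coefficients, namely $Z^A_k(\rho)=\sum_m G^A_m(\rho)F^A_{k-m}(\rho)$, which may be useful independently; your approach is shorter, avoids series manipulations whose convergence and interchanges the paper does not justify, and makes transparent that Definition~3 is simply the chirp-conjugated pullback of ordinary convolution through relation~(13).
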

\begin{proof}
	In view of (39), we get
	\begin{align}
		\begin{split}
			e^{-i\rho r_{0} cos\left( \phi-\theta_{0} \right)}=\sum_{m=-\infty}^{+\infty}i^{-m}J_{m}\left( \rho r_{0} \right)e^{-im\theta_{0}}e^{im\phi}.
		\end{split}
	\end{align}
	By (12) and (58), it follows that 
	\begin{align}
		\begin{split}
			&\;\quad F^{-1}\left[ e^{-i\rho r_{0}cos\left(\phi-\theta_{0} \right) }F^{A}\left(b\rho,\phi \right) \right](r,\theta)\\
			&=\frac{1}{2\pi}\int_{0}^{+\infty}\int_{0}^{2\pi}\sum_{m=-\infty}^{+\infty}i^{-m}J_{m}\left( \rho r_{0} \right)e^{-im\theta_{0}}e^{im\phi}\\
			&\times\sum_{n=-\infty}^{+\infty}H_{n}^{A}\left[f_{n} \right]\left( b\rho \right)e^{in\phi}\sum_{k=-\infty}^{+\infty}i^{k}J_{k}\left( r\rho \right)e^{ik\theta}e^{-ik\phi}\rho\rm{d}\rho\rm{d}\phi. \\
			&=\frac{1}{2\pi}\int_{0}^{+\infty}\sum_{m=-\infty}^{+\infty}i^{-m}J_{m}\left( \rho r_{0} \right)e^{-im\theta_{0}}\sum_{n=-\infty}^{+\infty}H_{n}^{A}\left[f_{n} \right]\left( b\rho \right)\\
			&\times\sum_{k=-\infty}^{+\infty}i^{k}J_{k}\left( r\rho \right)e^{ik\theta}\int_{0}^{2\pi}e^{im\phi+in\phi-ik\phi}\rm{d}\phi\rho\rm{d}\rho.
		\end{split}
	\end{align}
	Since 
	\begin{align}
		\begin{split}
			\int_{0}^{2\pi}e^{im\phi+in\phi-ik\phi}\rm{d}\phi&=2\pi\delta_{m+n-k}=\left \{  
			\begin{array}{ll}
				2\pi,                         & m+n-k=0\\
				0,                            & otherwise
			\end{array} \right.
		\end{split},
	\end{align}
	where $\delta_{\cdotp,\cdotp}$ denotes the kronecker delta operator.\\
	Hence, (59) can be written as
	\begin{align}
		\begin{split}
			i^{n}\int_{0}^{+\infty}\sum_{m=-\infty}^{+\infty}i^{-m}J_{m}\left( \rho r_{0} \right)e^{-im\theta_{0}}\sum_{k=-\infty}^{+\infty}H_{k-m}^{A}\left[f_{k-m} \right]\left( b\rho \right)J_{k}\left( r\rho \right)e^{ik\theta}\rho\rm{d}\rho.
		\end{split}
	\end{align}
	We make the following symbols 
	\begin{align}
		\begin{split}
			\tilde{z}(r,\theta)=e^{i\left[ \frac{a}{2b}r^{2}+\frac{\sqrt{2}\tau r}{b}sin(\theta+\frac{\pi}{4})\right] }z\left(r,\theta \right),
		\end{split}
	\end{align}
	\begin{align}
		\begin{split}		
			\tilde{g}(r_{0},\theta_{0})=e^{i\left[ \frac{a}{2b}r_{0}^{2}+\frac{\sqrt{2}\tau r_{0}}{b}sin(\theta_{0}+\frac{\pi}{4})\right] }g\left(r_{0},\theta_{0} \right),
		\end{split}
	\end{align}	
and the Fourier series expansion of the 2D OLCT has a form	 
	\begin{align}
		\begin{split}
			\tilde{g}(r_{0},\theta_{0})=\sum_{n=-\infty}^{+\infty}\tilde{g}_{n}\left( r_{0} \right)e^{in\theta_{0}}, 
		\end{split}
	\end{align}
	where $\tilde{g}_{n}\left( r_{0} \right)=w_{2}e^{i\frac{a}{2b}r_{0}^{2}}g_{n}\left( r_{0} \right)$, $w_{2}$ is given by (15) and $g_{n}\left( r_{0} \right)$ is the $n$th term in the Fourier series for the function $g(r_{0},\theta_{0})$.\\
	Using (56), (61) and (64) that
	\begin{align}
		\begin{split} 
			\tilde{z}(r,\theta)&=i^{n}\frac{1}{2\pi}\int_{0}^{+\infty}\int_{0}^{+\infty}\sum_{n=-\infty}^{+\infty}\tilde{g}_{n}\left( r_{0} \right)\sum_{m=-\infty}^{+\infty}i^{-m}J_{m}\left( \rho r_{0} \right)\\
			&\times\sum_{k=-\infty}^{+\infty}H_{k-m}^{A}\left[f_{k-m} \right]\left( b\rho \right)J_{k}\left( r\rho \right)e^{ik\theta}\int_{0}^{2\pi}e^{in\theta_{0}-im\theta_{0}}r_{0}\rm{d} r_{0}\rho\rm{d}\rho.\\
		\end{split}
	\end{align}
	Let $m=n$ and by (30), we get
	\begin{align}
		\begin{split} 
			\tilde{z}(r,\theta)&=\int_{0}^{+\infty}\sum_{m=-\infty}^{+\infty}\int_{0}^{+\infty}\tilde{g}_{m}\left( r_{0} \right)J_{m}\left( \rho r_{0} \right)\\
			&\times\sum_{k=-\infty}^{+\infty}H_{k-m}^{A}\left[f_{k-m} \right]\left( b\rho \right)J_{k}\left( r\rho \right)e^{ik\theta}\rho\rm{d}\rho\\
			&=\sum_{k=-\infty}^{+\infty}\int_{0}^{+\infty}\sum_{m=-\infty}^{+\infty}H_{m}\left[\tilde{g}_{m}\left( r_{0} \right) \right]\left(\rho \right)\\
			&\times H_{k-m}^{A}\left[f_{k-m} \right]\left( b\rho \right)J_{k}\left( r\rho \right)e^{ik\theta}\rho\rm{d}\rho.   
		\end{split}
	\end{align}
	The (66) is a Fourier series expansion, then the $k$th term is
	\begin{align}
		\begin{split} 
			\tilde{z}_{k}(r)&=\int_{0}^{+\infty}\sum_{m=-\infty}^{+\infty}H_{m}\left[\tilde{g}_{m}\left( r_{0} \right) \right]\left(\rho \right)H_{k-m}^{A}\left[f_{k-m} \right]\left( b\rho \right)J_{k}\left( r\rho \right)\rho\rm{d}\rho, 
		\end{split}
	\end{align}
	where $\tilde{z}_{k}\left( r \right)=w_{2}e^{i\frac{a}{2b}r^{2}}g_{k}\left( r_{0} \right)$, $w_{2}$ is given by (15) and $z_{k}\left( r \right)$ is the $k$th term in the Fourier series for the function $z(r,\theta)$.\\
	Apply the inverse HT formula, (67) becomes
	\begin{align}
		\begin{split} 
			H_{k}\left[ \tilde{z}_{k}\right]\left( \rho\right)  &=\sum_{m=-\infty}^{+\infty}H_{m}\left[\tilde{g}_{m} \right]\left(\rho \right)H_{k-m}^{A}\left[f_{k-m} \right]\left( b\rho \right). 
		\end{split}
	\end{align}
	According to the relationship between the OLCHT and the HT, we obtain
	\begin{align}
		\begin{split} 
			H_{k}^{A}\left[ z_{k}\right]\left( b\rho\right)  &=i^{k}\frac{w_{1}\ell_{A}}{b}e^{i\frac{d}{2b}\rho^{2}}H_{k}\left[ \tilde{z}_{k}\right] \left( \rho\right),\\ 
			H_{m}^{A}\left[ g_{m}\right]\left( b\rho \right)  &=i^{m}\frac{w_{1}\ell_{A}}{b}e^{i\frac{d}{2b}\rho^{2}}H_{m}\left[ \tilde{g}_{m}\right]  \left( \rho\right),
		\end{split}
	\end{align}
	where $w_{1}$ is given by (15) and $\ell_{A}$ is given by (6). Hence 
	\begin{align}
		\begin{split} 
			i^{-k}H_{k}^{A}\left[ z_{k}\right]\left( \rho\right)  &=i^{-m}\sum_{m=-\infty}^{+\infty}H_{m}^{A}\left[ \tilde{g}_{m}\right]\left( \rho \right)H_{k-m}^{A}\left[f_{k-m} \right]\left( \rho \right).
		\end{split}
	\end{align}
	By (60), we have
	\begin{align}
		\begin{split} 
			H_{k}^{A}\left[ z_{k}\right]\left( \rho\right)  &=\sum_{m=-\infty}^{+\infty}H_{m}^{A}\left[ g_{m}\right]\left( \rho \right)H_{k-m}^{A}\left[f_{k-m} \right]\left( \rho \right),
		\end{split}
	\end{align}
	that is, a convolution relation
	\begin{align}
		\begin{split} 
			Z_{k}^{A}\left( \rho\right)  &=\sum_{m=-\infty}^{+\infty}G_{m}^{A}\left( \rho \right)F_{k-m}^{A}\left( \rho \right).
		\end{split}
	\end{align}

	According to the formula (46), as we all know, the convolution of two sets of the Fourier coefficients is equivalent to the multiplication of functions \cite{[32]}. Therefore, it can be seen from (72) that there is a multiplicative relationship
	\begin{align}
		\begin{split}
			Z^{A}\left( \rho,\phi\right) =F^{A}\left( \rho,\phi\right)G^{A}\left( \rho,\phi\right),
		\end{split}
	\end{align}	
	where $Z^{A}\left( \rho,\phi\right)$, $F^{A}\left( \rho,\phi\right)$ and $G^{A}\left( \rho,\phi\right)$ denote the 2D OLCT in polar coordinates of the functions $z(r,\theta)$, $f(r,\theta)$ and $g(r,\theta)$, respectively. \end{proof}

\section{Conclusions}
\label{Con}
According to the definition of the OLCT in the two-dimensional Cartesian coordinate system, this paper obtains its new expression in polar coordinates, and derives the mathematical formula of the OLCHT. Then, the relationship between the 2D OLCT and the OLCHT is derived. Finally, using this relationship, the spatial shift theorem and convolution theorem of the 2D OLCT are proved. These contents are new achievements in the field of the 2D OLCT polar coordinates. In further work, we will consider the sampling theorem of the 2D OLCT in optics and signal processing.\

\bibliographystyle{elsarticle-num}

\end{document}